\newcommand{\D}[2]{\frac{\partial #1}{\partial #2}}
\newcommand{\eps}{\epsilon}
\newcommand{\brc}[1]{\left({#1}\right)}
\newcommand{\bfx}{{\bf x}}
\newcommand{\bfy}{{\bf y}}
\DeclareMathOperator{\reals}{\ensuremath{\mathbb{R}}}
\DeclareMathOperator{\prob}{{\text{\rm P}}}    
\newcommand{\dr}{{\tt r}}
\newcommand{\dl}{{\tt l}}
\newcommand{\ledge}{\lambda}           
\newcommand{\redge}{\rho}              
\newcommand {\barr}{\begin{array}}
\newcommand {\earr}{\end{array}}
\newcommand{\graphtextsize}{\small}
\newtheorem{lemma}{Lemma}
\newtheorem{theorem}{Theorem}
\newtheorem{example}{Example}
\begin{document}

\title{Existence Proofs of Some EXIT Like Functions}

\author{
\authorblockN{Vishwambhar Rathi}
\authorblockA{School of Computer and Communication Sciences\\
EPFL \\
vishwambhar.rathi@epfl.ch}
\and
\authorblockN{Ruediger Urbanke}
\authorblockA{School of Computer and Communication Sciences \\
EPFL \\
ruediger.urbanke@epfl.ch}
}

\maketitle

\begin{abstract}
The Extended BP (EBP) Generalized EXIT (GEXIT) function introduced in
\cite{MMRU05} plays a fundamental role in the asymptotic analysis of sparse
graph codes. For transmission over the binary erasure channel (BEC) the
analytic properties of the EBP GEXIT function are relatively simple and well
understood. The general case is much harder and even the existence of the curve
is not known in general. We introduce some tools from non-linear analysis which
can be useful to prove the existence of EXIT like curves in some cases. The
main tool is the Krasnoselskii-Rabinowitz (KR) bifurcation theorem.
\end{abstract}

\section{Introduction}
The Extended BP (EBP) GEXIT function introduced in \cite{MMRU05} plays an important
role in the analysis of iterative coding systems. For transmission over the binary
erasure channel (BEC) this function encodes both the behavior of the BP as well
as the MAP decoder in the asymptotic limit of infinite blocklengths. 
Further, in this case the EBP GEXIT function has a very simple analytic expression
in terms of the degree distribution of the ensemble.

It is conjectured that the fundamental characteristic of EBP GEXIT functions 
remains valid also for general (binary memoryless symmetric) channels. 
Figure~\ref{fig:gexiteg} shows the EBP GEXIT
function for the degree distribution pair $(\lambda(x) = 0.25 x+0.75 x^7$, $\rho(x) = x^7)$,
assuming that transmission takes place over the binary symmetric channel (BSC).
\begin{figure}[htp]
\centering
\input{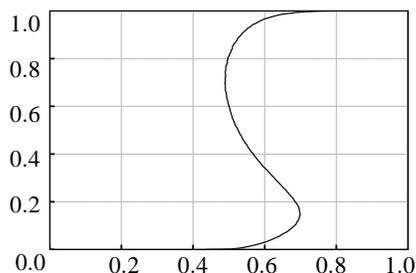}
\vspace*{0.2cm}
\caption{\label{fig:gexiteg} EBP GEXIT function for $\lambda(x) = 0.25 x+0.75 x^7$, $\rho(x) = x^7$
and Binary Symmetric Channel.}
\end{figure}
\vspace*{-0.1in}
Note that this curve smoothly connects the point $(1, 1)$, corresponding to 
the channel BSC$(\frac12)$, with the point $(h^{\text{stab}}, 0)$, where
$h^{\text{stab}}$ corresponds to that channel parameter at which the coding
system changes its stability behavior. The curve was computed using a 
procedure suggested in \cite{MMRU05}. 

This procedure guarantees in general the existence of a fixed point density for
every point on the vertical axis. Unfortunately, it does {\em not} 
guarantee that the set of fixed points so computed forms a smooth one-dimensional
manifold. Such a property however, is required in order to complete  
the theory of EBP GEXIT functions. E.g., it is known that if the curve is smooth
then the area it encloses is equal to the code rate. Combined with the General Area Theorem
(first proved for the BEC in \cite{AKB04} and then extended to the general case
in \cite{MMRU05}) this statement on the area gives rise to bounds on the MAP performance
for sparse graph codes. For the BEC it has been shown that in many cases the bound
is tight and it is conjectured to be tight not only for the BEC but also in the general
case.
 
The existence of the EBP GEXIT function is therefore a fundamental
question at the heart of the asymptotic theory of sparse graph codes.
We introduce some
tools from non-linear analysis which can be useful to prove the existence
of EXIT like curves in some cases. The main tool is the
Krasnoselskii-Rabinowitz (KR) bifurcation theorem.

\section{Definitions and Theorem Related to the Existence of Fixed Points}
As discussed in the last section, it is a difficult task to prove
the existence of the EBP GEXIT curve for general channels. I.e., it is 
difficult to prove that the set of fixed point densities of density evolution
forms a differentiable one-dimensional manifold. 

Although we currently do not know how to prove the existence for the general
case, a fundamental theorem of non-linear analysis, 
called the Krasnoselskii-Rabinowitz (KR) theorem (\cite{Bro03book}, \cite{Kes04book}),
can be helpful in some instances to establish the existence of 
an unbounded connected component of fixed points. To be more precise:
density evolution represents a non-linear map in the space of densities.
If we are given a degree distribution pair with a non-zero fraction
of degree-two variable nodes and a family of BMS channels, then this
map has a {\em bifurcation} point for that channel parameter which
corresponds to the stability condition. In other words, consider the channel parameter  
for which the linearization of the density evolution map around the density corresponding to 
perfect decoding has its largest eigenvalue equal to one. Then this channel 
parameter is a bifurcation point.
Under some technical conditions
the KR theorem then guarantees that there is a connected set of
fixed points which starts at this bifurcation point and which either extends
to infinity or which connects back to another bifurcation point. 
This is not quite as strong a statement as we would wish: 
we are not guaranteed that this connected set forms a smooth manifold, nor do
we know that the curve connects to the fixed point corresponding to the worst density 
and worst channel. Nevertheless, if the theorem applies, we at least know the existence 
of the EBP GEXIT curve locally around the stability point. Before we can show some
cases where the KR theorem can be applied let us quickly review 
the main notation and the main statement.

We denote a generic Banach space by $X$ (e.g. $X=\reals^N$).  
We denote elements of $X$ in boldface letters, i.e., ~$\bfx \in X$. 
We denote the space of bounded linear
operators from $X$ to $X$ by $L(X)$. We are interested in maps of the form
$G:\reals \times X \to X$. The argument $\gamma$ of $G(\gamma, \bfx)$ is called
the {\em parameter}. In our setting the parameter will be the {\em channel parameter}
(e.g., the erasure probability of the BEC or the cross-over probability for the BSC).
Recall the following definitions:
\begin{itemize} 
{\item Completely Continuous (CC) Map:}
A map $G:\reals \times X \to X$ is CC if it maps every bounded set $A$ of
$\reals \times X $ to a relatively compact set in $X$.  
{\item Frechet differentiable:}
Let $G:\reals \times X \to X$ be a map such that $G(\gamma,{\bf 0})={\bf 0}$.
$G$ is Frechet differentiable at $\bfx={\bf 0}$ if there exists $T \in L(X)$
such that, given $\eps > 0$ and an interval $[\gamma_0, \gamma_1]$ of $\reals$,
there exists $\delta > 0$ with the property that $||\bfy|| < \delta$ implies 
\[
\frac{||G(\gamma, \bfy)-\gamma T \bfy||}{||\bfy||} < \epsilon 
\]  
for all $\gamma
\in [\gamma_0,\gamma_1]$. Note that $\delta$ depends on both the choice of
interval and the value of $\eps$. We say that $\gamma T$ is the Frechet
derivative of $G$ at ${\bf 0}$.
\end{itemize} 
We denote the set of non trivial fixed points of $G$ by $S =
\{(\gamma, \bfx): G(\gamma, \bfx) = \bfx, \bfx \neq {\bf 0}\}$ and the closure of $S$ by
$\overline{S}$. If a point $(\mu, {\bf 0}) \in \overline{S}$, then the number $\mu$
is called a {\it bifurcation point} for the solutions to $G(\gamma, \bfx) = \bfx$.
\begin{theorem}[KR Theorem]
\protect{\cite[Theorem~17.8]{Bro03book}}
\label{thm:kr}
Let $X$ be a Banach space and let $G : \reals \times X \to X$ be a map. 
 Let $S = \{(\gamma, \bfx): G(\gamma, \bfx) 
= \bfx, \bfx \neq 0\}$ be the set of non trivial fixed points of $G$ and 
let $\overline{S}$ denote the closure of $S$. 
Assume that the following hypothesis holds.
\begin{enumerate}
\item $G(\gamma, \bfx)$ is a completely continuous map. 
\item $G(\gamma, \bfx)$ is Frechet differentiable at ${\bf 0}$, with 
Frechet derivative $\gamma T$. 
\item Let $\frac{1}{\mu}$ be an eigenvalue of $T$ which is of odd algebraic multiplicity.
\end{enumerate}
Then there exists a maximal closed connected subset $C_{\mu}$ of $\overline{S}$ which 
contains $(\mu, {\bf 0})$ and one of the following is true.
\begin{enumerate}
\item $C_\mu$ is unbounded in $\reals \times X$.
\item $C_\mu$ contains $(\mu^*, {\bf 0})$ for some other bifurcation point $\mu^* \neq \mu$.
\end{enumerate}
\begin{figure}[ht]
\begin{center}
\begin{picture}(130,90)
\put(-20,0){\includegraphics[width=2.0in, height=1.3in]{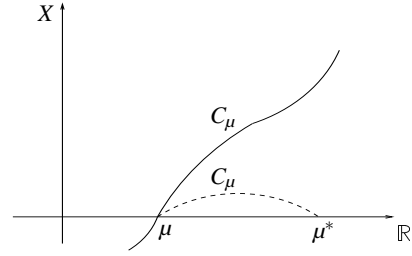}}
\put(128,7){\makebox(0,0){\small{$\reals$}}}
\put(-7,90){\makebox(0,0){\small{$X$}}}
\put(38,7){\makebox(0,0){\small{$\mu$}}}
\put(98,7){\makebox(0,0){\small{$\mu^*$}}}
\put(60,50){\makebox(0,0){\small{$C_\mu$}}}
\put(60,27){\makebox(0,0){\small{$C_\mu$}}}
\end{picture}
\end{center}
\caption{The solid curve shows how the component $C_\mu$ would look like if the first 
conclusion of theorem holds and the dotted one shows the how the component $C_\mu$ would 
look like if the second conclusion holds.}
\label{fig:krpic}
\end{figure}
\end{theorem}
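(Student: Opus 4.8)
The plan is to deduce Theorem~\ref{thm:kr} from Leray--Schauder degree theory, following the classical global bifurcation argument. Since $G$ is completely continuous, for each fixed $\gamma$ the map $\bfx \mapsto \bfx - G(\gamma,\bfx)$ is a compact perturbation of the identity, so the Leray--Schauder degree $\deg(I - G(\gamma,\cdot), \Omega, 0)$ is defined for every bounded open $\Omega \subset X$ whose boundary carries no fixed point of $G(\gamma,\cdot)$, and it is invariant under admissible homotopies and additive over disjoint open pieces; one also uses that the Frechet derivative $T$ of a compact map is itself compact, hence has discrete nonzero spectrum. I would first recall the Krasnoselskii index formula: if ${\bf 0}$ is an isolated fixed point of $G(\gamma,\cdot)$ and $1$ is not an eigenvalue of $\gamma T$, then the local index $i(\gamma) := \deg(I - G(\gamma,\cdot), B_r({\bf 0}), 0)$ for small $r$ equals $(-1)^{\beta(\gamma)}$, where $\beta(\gamma)$ is the sum of the algebraic multiplicities of the eigenvalues of $\gamma T$ lying in $(1,\infty)$. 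Using that $1/\mu$ is an isolated point of the spectrum of $T$, fix $\eps > 0$ so small that $\mu$ is the only value of $\gamma \in [\mu-\eps,\mu+\eps]$ for which $1$ is an eigenvalue of $\gamma T$; then $I - \gamma T$ is invertible for $\gamma \in [\mu-\eps,\mu+\eps]\setminus\{\mu\}$, which together with the uniform Frechet estimate makes ${\bf 0}$ an isolated fixed point of $G(\gamma,\cdot)$ there, and as $\gamma$ passes through $\mu$ the unique eigenvalue $\gamma/\mu$ of $\gamma T$ crosses the value $1$, changing $\beta(\gamma)$ by the odd algebraic multiplicity of $1/\mu$. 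Consequently $i(\mu-\eps) = -\,i(\mu+\eps)$, and both are nonzero.

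The heart of the proof is a contradiction argument. Suppose neither alternative holds, i.e.\ $C_\mu$ is bounded in $\reals \times X$ and contains no bifurcation point other than $(\mu,{\bf 0})$. Because $G$ is completely continuous, $\overline{S}$ has relatively compact intersection with every bounded subset of $\reals \times X$, so the bounded closed set $C_\mu$ is compact, and it is a connected component of $\overline{S} \cap \overline{B_R}$ for $R$ large. Whyburn's separation lemma then yields a bounded open set $\mathcal{O} \subset \reals \times X$ with $C_\mu \subset \mathcal{O}$ and $\partial\mathcal{O} \cap \overline{S} = \emptyset$; shrinking $\mathcal{O}$ near the trivial line, e.g.\ by realizing it as a finite union of small balls covering $C_\mu$, I would further arrange that $\mathcal{O} \cap (\reals \times \{{\bf 0}\}) = (\mu-\eps',\mu+\eps') \times \{{\bf 0}\}$ for some $0 < \eps' \le \eps$, with no bifurcation point in $[\mu-\eps',\mu+\eps']$ except $\mu$. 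Writing $\mathcal{O}_\gamma = \{\bfx : (\gamma,\bfx) \in \mathcal{O}\}$, one checks (using $\partial(\mathcal{O}_\gamma) \subseteq (\partial\mathcal{O})_\gamma$) that the only fixed point that can occur on $\partial(\mathcal{O}_\gamma)$ is ${\bf 0}$, and this happens only at $\gamma = \mu \pm \eps'$; hence $d(\gamma) := \deg(I - G(\gamma,\cdot), \mathcal{O}_\gamma, 0)$ is well defined and, by generalized homotopy invariance, locally constant on $\reals \setminus \{\mu-\eps',\mu+\eps'\}$. It vanishes for $|\gamma|$ large since then $\mathcal{O}_\gamma = \emptyset$, and by excision the jump of $d$ across $\mu-\eps'$ (where the trivial branch enters) equals $i(\mu-\eps')$, while the jump across $\mu+\eps'$ (where it exits) equals $-\,i(\mu+\eps')$. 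Telescoping the jumps between $-\infty$ and $+\infty$ gives $i(\mu-\eps') - i(\mu+\eps') = 0$, contradicting $i(\mu-\eps') = -\,i(\mu+\eps') \neq 0$ from the first paragraph; this contradiction proves the theorem.

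The step I expect to be most delicate is the construction of the isolating neighborhood $\mathcal{O}$: one must keep $\partial\mathcal{O}$ off the possibly irregular set $\overline{S}$ while simultaneously controlling precisely how $\mathcal{O}$ meets the trivial line $\reals \times \{{\bf 0}\}$, which is exactly where Whyburn's separation lemma applied to the compact component $C_\mu$, together with a careful ball covering, is needed. A secondary subtlety is the bookkeeping of signs in the jump relations for $d(\gamma)$ and the verification, via the Krasnoselskii index formula, that only the eigenvalue attached to $1/\mu$ contributes to the parity change of $\beta(\gamma)$; this is precisely where hypothesis~3 (oddness of the algebraic multiplicity) is used, ensuring $i(\mu-\eps') \neq i(\mu+\eps')$.
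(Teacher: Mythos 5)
The paper does not prove Theorem~\ref{thm:kr} at all: the tag \cite[Theorem~17.8]{Bro03book} in the theorem header signals that the statement is being quoted verbatim from Brown's book and used as a black box throughout the rest of the note, so there is no in-paper argument to compare your proposal against.

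That said, your outline is a recognizable skeleton of the classical Rabinowitz degree-theoretic proof of this result: compactness of the bounded component $C_\mu$ via complete continuity, Whyburn's separation lemma to manufacture an isolating bounded open set $\mathcal{O}$, the Krasnoselskii--Leray--Schauder index formula $i(\gamma)=(-1)^{\beta(\gamma)}$ combined with the odd-multiplicity crossing of the characteristic value to force $i(\mu-\eps)=-\,i(\mu+\eps)\neq 0$, and the vanishing of the slice degree for $|\gamma|$ large because $\mathcal{O}_\gamma=\emptyset$ there. The one place you should tighten is the ``jump'' bookkeeping. Since every point of $\reals\times\{\mathbf{0}\}$ is a fixed point of $G(\gamma,\cdot)$, the two trivial points $(\mu\pm\eps',\mathbf{0})$ necessarily lie on $\partial\mathcal{O}$; consequently $d(\gamma)$ is not merely discontinuous at $\gamma=\mu\pm\eps'$ but is a priori undefined there, and ``the jump across $\mu-\eps'$ equals $i(\mu-\eps')$'' is not a statement one can read off directly. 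The standard repair (Rabinowitz, and Brown's presentation) is to first excise a thin cylinder $[\mu-\eps',\mu+\eps']\times\overline{B_\rho(\mathbf{0})}$ (with $\rho$ chosen by the isolation argument of Theorem~\ref{thm:bpleqev}) so that the boundary of the remaining open set carries no fixed points at all, apply generalized homotopy invariance to conclude that the ``remainder'' degree is constant, and then reintroduce the trivial branch by additivity to obtain $d(\gamma)=\text{const}+i(\gamma)$ on the punctured interval; the constancy of $d$ on each side together with $d\equiv 0$ at $\pm\infty$ then gives $i(\mu-\eps')=i(\mu+\eps')$, which is the contradiction you want. With that step made precise your sketch matches the textbook argument.
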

A graphical representation of the KR theorem is shown in Figure~\ref{fig:krpic}. 
 
Our basic plan of attack is the following. In our setting $\bfx$ will denote
a density, and $G$ will be the density evolution map. We want to parametrize the
space in such a way that ${\bf 0}$ denotes the desired fixed point corresponding
to perfect decoding. The parameter $\mu$ will parametrize the channel.
If we can show that the linearization of the density evolution map around ${\bf 0}$
has eigenvalue $1/\mu$, where $\mu$ denotes
the channel parameter which corresponds to the stability condition, and 
if the linearization fulfills the desired technical conditions,
then there is a connected component of fixed-points which either extends to infinity
or is connected to another bifurcation point. At least locally, we will therefore
have proved the existence of a connected component of fixed points. 

In the following it is also good to know the following fact.
\begin{theorem}
\cite[Theorem~17.4]{Bro03book}
\label{thm:bpleqev}
Let $G : \reals \times X \to X$ be a completely continuous and Frechet
differentiable at $\bf{0}$, with derivative $\gamma T$. If $\frac{1}{\mu}$
is not an eigenvalue of the compact linear operator $T$, then there exist
$\epsilon, \eta > 0$ such that $G(\gamma, x) \neq x$ for all $(\gamma, x)$ for
which $|\gamma - \mu| < \epsilon$ and $0 < ||x|| < \eta$. In particular, $\mu$ is
not a bifurcation point for the solutions to $G(\gamma, x) = x$.
\end{theorem}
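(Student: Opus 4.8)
The plan is to reduce the non-existence of small nontrivial fixed points near $(\mu,{\bf 0})$ to the boundedly invertibility of $I-\gamma T$ for $\gamma$ in a neighbourhood of $\mu$; this is exactly where the Riesz--Schauder theory for compact operators enters, and it is the only genuinely nontrivial ingredient.

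First I would record the consequence of the hypothesis that $\frac{1}{\mu}$ is not an eigenvalue of the compact operator $T$. By the Fredholm alternative, $I-\mu T$ is injective, and since $T$ is compact it is then a bounded bijection of $X$ onto $X$ with bounded inverse; equivalently there is a constant $c>0$ with $\norm{(I-\mu T)\bfx}\ge c\norm{\bfx}$ for all $\bfx\in X$ (if $\mu=0$ this is trivial with $c=1$). Next I would propagate this lower bound to a neighbourhood of $\mu$: since $\norm{(I-\gamma T)\bfx-(I-\mu T)\bfx}\le |\gamma-\mu|\,\norm{T}\,\norm{\bfx}$, the set of $\gamma$ for which $I-\gamma T$ is bounded below by $\frac{c}{2}$ is open and contains $\mu$, so there is $\epsilon_0>0$ such that $\norm{(I-\gamma T)\bfx}\ge \tfrac{c}{2}\norm{\bfx}$ for all $\bfx\in X$ and all $\gamma$ with $|\gamma-\mu|<\epsilon_0$.

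Then I would invoke Frechet differentiability of $G$ at ${\bf 0}$ with the \emph{fixed} compact interval $[\mu-\epsilon_0,\mu+\epsilon_0]$ and tolerance $\eps=\tfrac{c}{4}$: this yields $\delta>0$ such that $\norm{\bfy}<\delta$ forces $\norm{G(\gamma,\bfy)-\gamma T\bfy}<\tfrac{c}{4}\norm{\bfy}$ for every $\gamma$ in that interval. Now suppose, for contradiction, that $G(\gamma,\bfx)=\bfx$ with $|\gamma-\mu|<\epsilon_0$ and $0<\norm{\bfx}<\delta$. Writing $(I-\gamma T)\bfx=\bfx-\gamma T\bfx=G(\gamma,\bfx)-\gamma T\bfx$ and combining the two bounds gives $\tfrac{c}{2}\norm{\bfx}\le\norm{(I-\gamma T)\bfx}<\tfrac{c}{4}\norm{\bfx}$, impossible for $\bfx\neq{\bf 0}$. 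Taking $\epsilon=\epsilon_0$ and $\eta=\delta$ establishes the displayed claim. The assertion that $\mu$ is not a bifurcation point then follows: if $(\mu,{\bf 0})\in\overline{S}$ there would be a sequence $(\gamma_n,\bfx_n)\in S$ with $(\gamma_n,\bfx_n)\to(\mu,{\bf 0})$, hence $\bfx_n\neq{\bf 0}$, $G(\gamma_n,\bfx_n)=\bfx_n$, and $|\gamma_n-\mu|<\epsilon$, $0<\norm{\bfx_n}<\eta$ for all large $n$, contradicting what was just shown.

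The main obstacle is the first step, i.e.\ turning ``$\frac{1}{\mu}$ is not an eigenvalue'' into the uniform invertibility estimate via the Fredholm alternative for compact operators; the remainder is a soft $\eps$--$\delta$ argument. (Note that complete continuity of $G$ is not needed for this particular proof beyond guaranteeing that the derivative operator $T$ may be taken compact, which is already part of the hypothesis here.) The one point requiring care is that the definition of Frechet differentiability in use is the ``locally uniform in $\gamma$'' one, so the interval $[\mu-\epsilon_0,\mu+\epsilon_0]$ must be fixed \emph{before} $\delta$ is chosen, which the ordering of steps above respects.
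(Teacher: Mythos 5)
The paper states this result purely by citation to \cite[Theorem~17.4]{Bro03book} and does not reproduce a proof, so there is nothing in the paper itself to compare against. Your argument is correct and is the standard one for this statement: the Riesz--Schauder (Fredholm alternative) theory for compact operators turns the non-eigenvalue hypothesis into a lower bound $\norm{(I-\mu T)\bfx}\ge c\norm{\bfx}$, a perturbation estimate $\norm{(I-\gamma T)\bfx-(I-\mu T)\bfx}\le\abs{\gamma-\mu}\,\norm{T}\,\norm{\bfx}$ propagates it uniformly to a $\gamma$-interval, and the paper's ``locally uniform in $\gamma$'' Frechet differentiability then supplies the $\delta$ that closes the contradiction. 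You correctly identify the one delicate point, namely that the interval $[\mu-\epsilon_0,\mu+\epsilon_0]$ must be fixed \emph{before} invoking differentiability to obtain $\delta$, and your ordering of the steps respects it. The remark that complete continuity of $G$ is not used beyond the compactness of $T$ (which the statement assumes explicitly anyway) is also accurate. No gaps.
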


We also use the following terminology in the rest of the paper.
Let $G:\reals \times \reals^N \to \reals^N$ be a map of the 
form $G=\{G_i\}_{i=1}^N$, where
$G_i:\reals \times \reals^N \to \reals$ is a multivariate polynomial in the
components of $\bfx$ and $G_i(\mu,\bfx) = G_i^1(\bfx) + \mu G_i^2(\bfx)$. Then we say that
$G$ is a vector polynomial map.
\section{Examples}
In principle, we would like to apply the KR theorem directly to the BP or min-sum
decoder. But there are some technical conditions that make the direct
application difficult. For example, the bifurcation point for the
BP decoder appears when the Bhattacharyya parameter is equal to 
$\frac{1}{\lambda'(0) \rho'(1)}$. 
This suggest that the Bhattacharyya parameter should play the role
of the parameter in the setting of the KR theorem. The theorem requires
that the parameter $\gamma$ takes on values in $\reals$ and not only
on $[0, 1]$. Therefore, we can not just work in the space of symmetric
densities (for which the Bhattacharyya parameter is in the range $[0, 1]$)
but we are required to extend the space. How this is best done is currently an open
question.  Because of these technical difficulties, we consider 
quantized decoders. First we show the application of the KR theorem to the simplest possible 
case. 

\begin{example}[BP Decoder for Binary Erasure Channel]
It is instructive (and easy) to analyze the fixed points of the density evolution map for 
the BEC$(\epsilon)$. Consider a degree distribution pair $(\ledge, \redge)$ with
$\lambda'(0) \rho'(1) > 0$.

The density evolution recursion reads
\begin{equation*}
x_{l} = \epsilon \lambda\brc{1-\rho\brc{1-x_{l-1}}}.  
\end{equation*} 
We take the space $X$ to be $X=\reals$ and set $G(\epsilon, x) = \epsilon
\lambda\brc{1-\rho\brc{1-x}}$.  Here the erasure probability $\epsilon$ plays
the role of the parameter. As $G(\epsilon, x)$ is a polynomial map, it is
completely continuous by Lemma~\ref{lem:cc} and Frechet differentiable by
Lemma~\ref{lem:fd}. From Lemma~\ref{lem:fd}, the Frechet derivative of
$G(\epsilon, x)$ is given by $\epsilon T x=\epsilon \lambda'(0) \rho'(1) x$.
Thus the parameter $\epsilon$ appears multiplicatively, as required by the KR
theorem.

Trivially, $\lambda'(0) \rho'(1)$ is the eigenvalue of the operator $T$ and
this eigenvalue has multiplicity one (the space is only one-dimensional), which
is odd.  Since by assumption $\lambda'(0) \rho'(1) > 0$, this eigenvalue is
strictly positive. Thus $1/\brc{\lambda'(0) \rho'(1)}$ is a bifurcation point.
As there can be only one eigenvalue of $T$, there can be at most one
bifurcation point (Theorem~\ref{thm:bpleqev}).  Thus the first conclusion of
Theorem~\ref{thm:kr} holds true: the connected component of fixed points
containing the bifurcation point $\brc{\frac{1}{\lambda'(0) \rho'(1)}, \bf{0}}$
is unbounded.  \end{example}

Of course, for this simple example we even have an explicit characterization of
this connected set of fixed points and an application of the powerful KR
theorem is not needed. But for only slightly more elaborate examples an explicit
characterization is typically no longer available. 

Consider now transmission over the Binary Symmetric Channel (BSC) with
transition probability $p$ and min-sum (MS) decoding. For iteration $l$, let
$M^{(l)}_{m->n}$ be the message sent from check node $m$ to variable node $n$
and $M^{(l)}_{n->m}$ be the message sent from variable node $n$ to check node
$m$. We denote the set of neighbors of a node $m$ by $\mathcal{N}(m)$. If we
assume that we represent messages as log-likelihood ratios then the processing
rules in each iterations are as follows: 
\begin{enumerate}
\item Processing rule at check nodes---for each $m$ and each $n \in \mathcal{N}(m)$, 
\begin{equation}\label{eqn:checkrule}
M^{(l)}_{m->n} = \prod_{n' \in \mathcal{N}(m)\\n} \text{sgn}\brc{M^{(l)}_{n'->m}} 
		\text{min}_{n' \in \mathcal{N}(m)\\n} \left|M^{(l)}_{n'->m}\right|
\end{equation}
\item Processing rule at variable nodes---for each $n$ and each $m \in \mathcal{N}(n)$,
\begin{equation}\label{eqn:varrule}
M^{(l)}_{n->m}=L_n + \sum_{m' \in \mathcal{N}(n)\\m} M^{(l-1)}_{m'->n}, 
\end{equation}
where $L_n$ denotes the initial log-likelihood ratio received by node $n$.
\end{enumerate}

We claim that there exist a one-to-one mapping between the messages of the min-sum decoder and 
the set of integers ${\mathbb Z}$. More precisely, the messages of the min-sum decoder are
of the form $i \ln \frac{1-p}{p}, i \in {\mathbb Z}$. This can be easily seen
by induction.  The initial messages from the variable nodes to the check nodes
are $\pm \ln \frac{1-p}{p}$. At the check nodes if all the incoming messages
are of the form $i \ln \frac{1-p}{p}$, then by inspecting the check node
processing rule given in Equation~(\ref{eqn:checkrule}) we see that the
outgoing message is again of this form. At the variable nodes, all the messages
are added up which clearly preserve this property. 
We can therefore
equivalently formulate message-passing under min-sum on the lattice
$\mathbb{Z}$ by assuming that the initial messages are from the set $\{ \pm
1\}$ and have probabilities $(1-p)$ and $p$, respectively.

In order to be able to apply the KR theorem, below we consider {\em
bounded} versions of min-sum, i.e., we bound the absolute value of the messages
to $M$, where $M$ is a fixed integer.  More precisely, we assume that message
alphabet is $\mathcal{M}=\{-M, -(M-1), \cdots, -1, 0, 1, \cdots, M-1, M\}$.  As
mentioned before, $\mathcal{M}_c = \{-1, 1\}$.  The message passing rule for
the check node side is the same as given by Equation~(\ref{eqn:checkrule}). On
the other hand, to enforce the boundedness constraint, we need to slightly
modify the message-passing rule for variable nodes. For a node of degree  $d_v$
the rule is defined by: 
\begin{equation}\label{eqn:vnodemap}
\Psi_v\brc{m_0, m_1, \cdots, m_{d_v-1}} = \left\{ 
\begin{array}{ll}
	  & \hspace*{-0.4cm}\textrm{$\exists i$ s.t.~$m_i=M$} \\ 
    M	  & \hspace*{-0.4cm}\textrm{$\nexists j$ s.t.~$m_j \neq -M$}\\ 
	  & \\
	  & \hspace*{-0.4cm}\textrm{$\exists i$ s.t.~$m_i=-M$}\\
   -M 	  & \hspace*{-0.4cm}\textrm{$\nexists j$ s.t.~$m_j \neq M$}\\ 
	  & \\
	  & \hspace*{-0.4cm}\textrm{$\exists i, j$ s.t.}\\
    0	  & \hspace*{-0.4cm}\textrm{$m_i=-M, m_j=M$}\\
	  & \\
	\mathcal{Q}\brc{\sum_{i=0}^{d_v-1} m_i} & \textrm{otherwise},
\end{array}\right.
\end{equation}
where the quantization function $\mathcal{Q}(x)=M$ if $x \geq M$,  
$\mathcal{Q}(x)=-M$ if $x \leq -M$ and equal to $x$ otherwise.
Note that the exact rule for the case when both $M$ and $-M$ are incoming to
the variable node is not really important since this should hardly ever happen 
if $M$ is large enough. This is because if $M$ is large, the quantized decoder 
will mimic more and more the min-sum decoder. 

For future reference, consider the ensemble $\brc{\Lambda(x) = 0.4 x^2 + 0.6
x^5, \Gamma(x) = x^4}$.  It has design rate $r=0.05$. The Shannon threshold for
this rate is $p^{\text{Sh}} = 0.369$.  Table~\ref{tab:msquantthresh} shows the
threshold values of this ensemble for increasing values of $M$ as well as the
threshold under true min-sum decoding. We see that the thresholds for finite
$M$ quickly converge to the unbounded case.
\vspace*{-0.1in}
\begin{table}[ht]
\centering
\begin{tabular} {ccccccc}
$M$ & 1 & 2 & 3 & 4 & 5 & $\infty$\\  
$\approx p^*$ & $0.0319$ & $0.0962$ & $0.0974$ & $0.1219$ &  \hspace*{-0.3cm} $0.1318$ & \hspace*{-0.25cm}$0.148$\\
\end{tabular}
\caption{\label{tab:msquantthresh} 
Thresholds of $\Lambda(x) = 0.4 x^2 + 0.6 x^5$,
$\Gamma(x) = x^4$ under quantized min-sum decoding.} 
\end{table}
\vspace*{-0.1in}
Note that this quantizer and the message passing rules satisfy the symmetry
conditions of \cite{RiU01}. Thus we can perform the density evolution under the
all-one codeword assumption. Recall that the alphabet has $2M+1$ elements.  But
since the probability of the individual elements sums up to one, the density
evolution recursion $G$ can be written as a function of $2 M$ variables.  Thus,
the underlying space is $X=\reals^{2 M}$. As can be easily seen, the density
evolution map is again a vector polynomial map.  Thus such a map is completely
continuous by Lemma~\ref{lem:cc} and Hypothesis $1$ of Theorem~\ref{thm:kr} is
satisfied. The first condition for the second hypothesis to hold true is that
$G\brc{p,\bf{0}}=\bf{0}$. Note that $\bf{x}=0$ implies that with probability
one, the message is equal to $M$. Now at the check node side if all the
incoming messages are equal to $M$, then the outgoing is also equal to $M$. The
same holds true for the variable node side by the definition of $\Psi_v$
given in Equation~(\ref{eqn:vnodemap}).  Also the channel transition
probability $p$ appears only as $p$ and $1-p$. Thus the Frechet derivative of
the map is of the form $p T + T'$, where both $T, T' \in \reals^{2 M \times 2
M}$. In order to satisfy Hypothesis $2$ of Theorem~\ref{thm:kr}, we need to
modify the density evolution map.  We use Lemma~\ref{lem:affine} and
consider the derived map with Frechet derivative $p (I_{2 M}-T')^{-1}T$. 
\begin{example}[Min-Sum Decoder with $M=2$]\label{eg:m2}
For our running example consider $M=2$.
The Frechet derivative is of the form $p T
+ T'$, where $T'$ is not identically zero.  Fortunately $(I_4-T')^{-1}$ exists.
As mentioned before, by Lemma~\ref{lem:affine} we need to study the
eigenvalues of the matrix $(I_4-T')^{-1} T$. The matrix $(I_4-T')^{-1} T$ has
eigenvalues $\frac{1}{\mu_1}=3.50027$, $\frac{1}{\mu_2}=-2.70249$ and the other
two eigenvalues are zero.  Both $\frac{1}{\mu_1}$ and $\frac{1}{\mu_2}$ have
multiplicity one (i.e., the multiplicities are odd).  This implies that the KR
theorem is applicable to both the eigenvalues and at least one of the
conclusion of the KR theorem must hold true for both of them. In particular
$(\mu_1, {\bf 0})$ and $(\mu_2, {\bf 0})$ are bifurcation points. Let
$C_{\mu_1}$ and $C_{\mu_2}$ be the fixed point component containing $\mu_1$ and
$\mu_2$ respectively.  Now by the KR theorem either the fixed point connected
component $C_{\mu_1}$ and  $C_{\mu_2}$ are unbounded or $C_{\mu_1}= C_{\mu_2}$.

We can compute the fixed points explicitly in this case. The result is shown in
Figure~\ref{fig:qpt5}. Since the fixed points are elements of $\reals^4$ we
need to project them into $\reals$ in order to be able to plot them. We choose
to apply the error probability operator. As the density evolution is done
assuming that the all-one codeword has been transmitted, so the error
probability operator sums up the component corresponding to negative indices
and adds to this sum half the weight of index zero as it is like an erasure.
\begin{equation}\label{eqn:peop}
\prob_e\brc{\bf{x}} = \sum_{i=-M}^{-1} x_i + \frac{x_0}{2}.
\end{equation}
As we can see, the second conclusion of Theorem~\ref{thm:kr} holds
i.e.~$C_{\mu_1}=C_{\mu_2}=C_\mu$.  The fixed point connected component $C_\mu$
containing the point a$=\brc{\mu_1, \bf{0}}=\brc{0.28569, \bf{0}}$ also
contains the point d$=\brc{\mu_2, \bf{0}}=\brc{-0.37003, \bf{0}}$. In the
component $C_\mu$, the branch from a to b is stable, b to c is unstable and c
to d is stable. The component $C'$ is stable. The threshold is $p^*=0.0962$.
The fixed point of iterative decoder at the threshold is represented by point e
of the fixed point component $C'$. Above the threshold, the fixed points of
iterative decoder moves upward along $C'$ as the channel transition probability
$p$ increases.  

\begin{figure}[htp]
\centering
\input{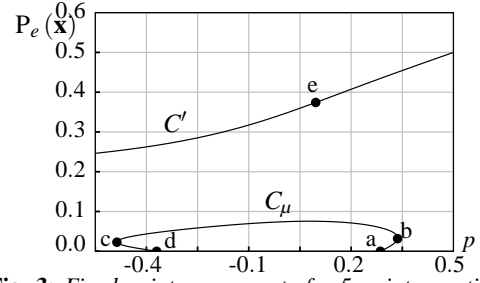}
\caption{\label{fig:qpt5} Fixed point components for $5$ point quantizer.} 
\end{figure}
\end{example}

\begin{example}[Min-sum decoder with $M=3$]
For our running example we consider~$M=3$. The Frechet derivative is again of
the form $p T + T'$. In this case also the inverse $(I_6-T')^{-1}$ exists. By
Lemma~\ref{lem:affine}, we need to study the eigenvalues of  $(I_6-T')^{-1} T$.
The matrix  $(I_6-T')^{-1} T$ has the only non-zero real eigenvalue as
$\frac{1}{\mu}=2.09804$ and its multiplicity is one. So the  KR theorem is
applicable in this case. Note that as there is only one non-zero eigenvalue,
there can be at most one bifurcation point by Theorem~\ref{thm:bpleqev}. Thus
the second conclusion of KR theorem can not be true. This implies that the
first conclusion holds: there is an unbounded component $C_{\mu}$ of fixed
point containing the bifurcation point $\mu$. In this case also we can compute
this component explicitly. As the fixed points are element of $\reals^6$, in
order to plot them we project them to one dimension by the error probability
operator given in Equation~\ref{eqn:peop}. The plot is shown in
Figure~\ref{fig:qpt7}. The bifurcation point is
a=$\brc{\mu, 0.0}=(0.476636, 0.0)$.  As far as the stability of the fixed
point in $C_\mu$ is concerned, the branch a to b is stable.  The fixed points
in branch b to c is unstable and from point c onwards the fixed points are
stable.  The point e represents the fixed point at which the iterative decoder
get stuck at threshold $p^* \approx 0.0974$. Above the threshold, the fixed
points of iterative decoder moves upward along $C_\mu$ as the channel
transition probability $p$ increases.
\vspace*{-0.1in}
\begin{figure}[htp]
\centering
\input{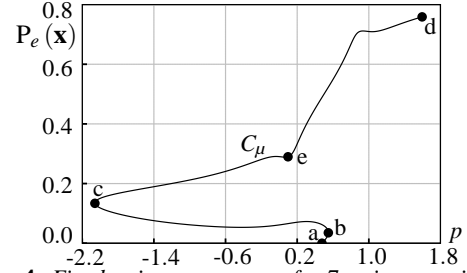}
\caption{\label{fig:qpt7} Fixed point components for $7$ point quantizer.} 
\end{figure}
\end{example}
\vspace*{-0.1in}
Discussion: We presented the examples $M=2$ and $M=3$. It is tempting
to increase $M$ and see how the fixed point structure changes. 
By taking $M$ to infinity, one would hope to recover  the structure of the 
fixed point components of the un-quantized min-sum decoder.

\begin{example}[Decoder with Erasure]
The decoder with erasure was introduced in \cite{RiU01}. 
The underlying channel is $BSC(p)$. On the variable node side the message-passing rule for
a node of degree $\dl$ reads
\begin{equation}
\Psi_v\brc{m_0,m_1,\cdots,m_{\dl-1}} = \mathrm{sgn}\brc{m_0+\sum_{i=1}^{\dl-1}
m_i}.\nonumber
\end{equation}
The rule for a check node of degree $\dr$ is
\begin{equation}
\Psi_c(m_1,\cdots,m_{\dr-1}) = \prod_{i=1}^{\dr-1} m_i.\nonumber
\end{equation}
Note that for this decoder if there are degree two variable nodes then the
threshold is $0$ i.e.~${\bf 0}$ can not be a fixed point. To see this, suppose
that all the incoming messages to variables nodes are equal to one.  Then with
probability $p$, the outgoing message from a variable node is equal to $0$.
Thus the probability of $0$ is equal to $\lambda_2 p$. Hence we assume that
$\lambda_2=0$. For this example $M=1$, hence the underlying space is
$X=\reals^2$. The density evolution equation can be found in \cite{RiU01}.  The
Frechet derivative of the density evolution map can again be computed and it
turns out that its only eigenvalue is $2 \lambda_3 \rho'(1)$.  But now this
eigenvalue has even multiplicity.  So we can not apply the KR theorem to this
case. In \cite{MaW04}, it was investigated if the conclusions of the KR theorem
is still applicable to an eigenvalue of even multiplicity. We are currently
investigating whether the result of \cite{MaW04} is applicable to the decoder
with erasure.  However numerical computation of fixed point suggest that indeed
$\frac{1}{2 \lambda_3 \rho'(1)}$ is a bifurcation point. For example, in
Figure~\ref{fig:dwe} we plot the fixed point component of $(3, 6)$ regular
ensemble. For this ensemble $2 \lambda_3 \rho'(1)=10$, so supposedly $p=0.1$ is
a bifurcation point. We can see from Figure~\ref{fig:dwe} that point a which
corresponds to $p=0.1$ is indeed a bifurcation point. The threshold for this
ensemble is $p^*=0.0708$.\footnote{This assumes that in the first iteration we set the
weight of the channel to $2$ and in all subsequent iterations to $1$.} The point b represents the fixed point at which
decoder get stuck at the threshold. The branch a to b is unstable. From b
onwards the fixed points are stable.
\vspace*{-0.15in}
\begin{figure}[htp]
\centering
\input{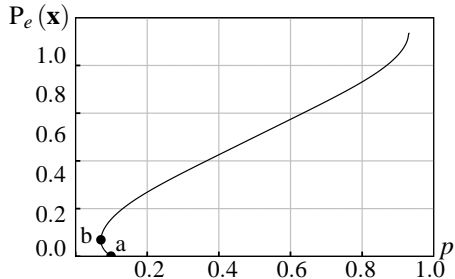}
\caption{\label{fig:dwe} Fixed point component for the decoder with erasure 
for $(3, 6)$ LDPC ensemble.} 
\end{figure}
\end{example}
\vspace*{-0.25in}
\section{Outlook}
We have shown how the tools of non-linear analysis can be used in proving the
existence of fixed points. Our ultimate goal is to understand the fixed point structure of 
the BP and the min-sum decoder. For the min-sum decoder we hope to accomplish our goal
by considering a sequence of quantized decoders where the number of quantization points
tends to infinity. Whether a similar strategy can be devised for the BP decoder is
still an open question.
\vspace*{-0.05in}
\appendix
\begin{lemma}\label{lem:cc}
Every vector polynomial map $G: \reals \times \reals^N \to \reals^N$ is a
completely continuous map.  
\end{lemma}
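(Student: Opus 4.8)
The plan is to reduce the claim to three elementary facts: polynomials are continuous, continuous images of compact sets are compact, and in a finite-dimensional space a set is relatively compact if and only if it is bounded (Heine--Borel). So there is essentially no genuine obstacle; the only point worth flagging is that compactness comes for free here precisely because the ambient space is $\reals^N$ rather than an arbitrary infinite-dimensional Banach space.

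First I would fix an arbitrary bounded set $A \subseteq \reals \times \reals^N$ and choose $R > 0$ so large that $A$ is contained in the compact box $K_R = [-R,R] \times [-R,R]^N$; this is possible by the definition of boundedness. Next I would observe that each coordinate map $G_i(\mu,\bfx) = G_i^1(\bfx) + \mu\, G_i^2(\bfx)$ is, viewed as a function of $(\mu,\bfx) \in \reals \times \reals^N$, a genuine polynomial (of degree one in $\mu$ and of arbitrary degree in the components of $\bfx$), hence continuous; therefore $G = (G_1,\dots,G_N)$ is continuous on all of $\reals \times \reals^N$.

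Then $G(K_R)$ is the continuous image of a compact set, hence compact and in particular bounded, and since $A \subseteq K_R$ we get $G(A) \subseteq G(K_R)$, so $G(A)$ is bounded. If one prefers an explicit estimate, write $G_i^1(\bfx) = \sum_\alpha a_{i,\alpha}\,\bfx^\alpha$ and $G_i^2(\bfx) = \sum_\beta b_{i,\beta}\,\bfx^\beta$; then for $(\mu,\bfx) \in K_R$ one has $|G_i(\mu,\bfx)| \le \sum_\alpha |a_{i,\alpha}|\,R^{|\alpha|} + R \sum_\beta |b_{i,\beta}|\,R^{|\beta|} =: M_{i,R}$, so $\|G(\mu,\bfx)\| \le (\sum_{i=1}^N M_{i,R}^2)^{1/2}$ on $A$. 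Finally, $\overline{G(A)}$ is a closed and bounded subset of $\reals^N$, hence compact by Heine--Borel; that is, $G(A)$ is relatively compact. Since $A$ was an arbitrary bounded set, $G$ maps bounded sets to relatively compact sets and is therefore completely continuous.
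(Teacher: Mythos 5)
Your proof is correct and follows essentially the same route as the paper's: a polynomial map sends bounded sets to bounded sets, so the closure of the image is closed and bounded, hence compact by Heine--Borel, which is exactly relative compactness. You are slightly more careful than the paper in taking the bounded set from the full domain $\reals\times\reals^N$ (the paper writes ``bounded set $S$ in $\reals^N$'') and in spelling out the bound, but the underlying argument is identical.
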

\begin{proof}
Consider any bounded set $S$ in $\reals^N$. As $S$ is bounded, 
so will be all the components $G_i(S)$. Hence the
 set $G(S)$ is also bounded. Clearly this would imply that the closure
$\overline{G(S)}$ is also bounded. In a finite dimensional vector space a
closed and bounded set is a compact. Hence $G(S)$ is relatively compact. Thus
the map $G$ is Completely continuous. 
\end{proof}
\begin{lemma}\label{lem:fd}
Let $G:\reals^N \to \reals^N$ be a vector polynomial map such that $G({\bf 0}) = {\bf 0}$.
Then $G$ is Frechet differentiable. The Frechet derivative $T$ of $G$ is a
matrix whose entries are given by $\{t_{ij}\}$ where $1 \leq i,j \leq N$ and 
\[
 \left. t_{ij} = \D{G_i}{x_j}\right\rvert_{\bfx={\bf 0}}.
\] 
\end{lemma}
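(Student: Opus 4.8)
The plan is to exploit the polynomial structure directly. First I would expand each component $G_i$ into its (finite) sum of monomials. Since $G({\bf 0})={\bf 0}$, none of these polynomials has a constant term, so I can write $G_i(\bfx) = \sum_{j=1}^{N} t_{ij} x_j + R_i(\bfx)$, where $t_{ij}$ is the coefficient of the degree-one monomial $x_j$ in $G_i$ and $R_i(\bfx)$ collects all monomials of total degree at least two. Put $T=(t_{ij})_{1\le i,j\le N}$; since $\reals^N$ is finite dimensional, $T$ is automatically a bounded linear operator. A one-line check gives the claimed formula for the entries: differentiating a monomial of degree $\ge 2$ with respect to some $x_j$ produces either $0$ or a monomial of degree $\ge 1$, which vanishes at ${\bf 0}$, hence $\left.\D{G_i}{x_j}\right\rvert_{\bfx={\bf 0}} = t_{ij}$.

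It then remains to verify the defining estimate, namely that $\norm{G(\bfy)-T\bfy}/\norm{\bfy}\to 0$ as $\norm{\bfy}\to 0$. By construction $G(\bfy)-T\bfy = (R_1(\bfy),\dots,R_N(\bfy))$, where every $R_i$ is a finite sum of monomials $c\,y_{j_1}\cdots y_{j_k}$ with $k\ge 2$. Fixing a constant $a$ with $\abs{y_j}\le a\norm{\bfy}$ for all $j$ (equivalence of norms on $\reals^N$), for $\norm{\bfy}\le 1$ each such monomial is bounded by $\abs{c}\,a^{k}\norm{\bfy}^{k}\le \abs{c}\,a^{k}\norm{\bfy}^{2}$, using $\norm{\bfy}^{k}\le\norm{\bfy}^{2}$ when $k\ge 2$ and $\norm{\bfy}\le 1$. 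Summing over the finitely many monomials in each $R_i$ and then over $i$ produces a finite constant $C$ with $\norm{G(\bfy)-T\bfy}\le C\norm{\bfy}^{2}$ for all $\norm{\bfy}\le 1$. Hence $\norm{G(\bfy)-T\bfy}/\norm{\bfy}\le C\norm{\bfy}$, and given $\eps>0$ the choice $\delta=\min\{1,\eps/C\}$ works.

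There is no genuine obstacle in this argument; it is a routine consequence of a polynomial having only finitely many monomials together with the equivalence of all norms on $\reals^N$. The only mild subtlety worth a remark concerns the way the lemma is invoked in the examples for parametrized maps of the form $G(\gamma,\bfx)=G^1(\bfx)+\gamma G^2(\bfx)$: there the constant $C$ above is assembled from the coefficients of $G$, which are affine in $\gamma$ and therefore uniformly bounded on any compact interval $[\gamma_0,\gamma_1]$, so one value of $\delta$ serves simultaneously for all $\gamma$ in that interval, exactly as the paper's definition of Frechet differentiability demands. I would include that observation to close the loop with the applications.
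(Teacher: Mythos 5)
Your proof is correct and follows essentially the same strategy as the paper's: split each $G_i$ into its linear part $\sum_j t_{ij}x_j$ plus a remainder of degree at least two, and observe that the remainder is $O(\norm{\bfx}^2)$ near the origin, so that dividing by $\norm{\bfx}$ yields something tending to zero. Your write-up is slightly more careful than the paper's (which loosely writes $o(\norm{\bfx}^2)$ where $O(\norm{\bfx}^2)$ is meant), and the closing remark about the constant being uniformly bounded for $\gamma$ in a compact interval is a worthwhile observation that ties the lemma to the paper's $\gamma$-uniform definition of Frechet differentiability.
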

\begin{proof}
Consider $||G(\bfx)-T \bfx||$. As $|\bfx_i| \leq ||\bfx||$, there are no linear term in
$G(\bfx)-T \bfx$ and $G({\bf 0})={\bf 0}$ implies that $||G(\bfx)-T \bfx|| = o\brc{||\bfx||^2}$. Hence
\[
 \frac{||G(\bfx)-T \bfx||}{||\bfx||} = o\brc{||\bfx||}.
\]
This proves the lemma.
\end{proof}

Note that Hypothesis 2 of Theorem~\ref{thm:kr} implies that the parameter 
$\gamma$ must appear multiplicatively in the Frechet derivative. But in many cases 
we see that the Frechet derivative is of the form $\gamma T + T'$. The following 
lemma says that in this case also the KR theorem can be applied provided the linear 
operator $I-T'$ is invertible.
\begin{lemma}\label{lem:affine} 
Let $G: \reals \times \reals^n \to \reals^n$ be a vector polynomial map and
Frechet differentiable with Frechet derivative $\gamma T + T'$. Let us assume that
$\brc{I_n-T'}^{-1}$ exists.  Let $F(\gamma, {\bf x}) \triangleq \brc{I_n-T'}^{-1}
\brc{G(\gamma, {\bf x})-T'{\bf x}}$. Then $F$ is a vector polynomial map and Frechet 
differentiable with Frechet derivative $\gamma \brc{I_n-T'}^{-1} T$. Also the set of 
fixed points of $F$ is same as set of fixed points of $G$. 
\end{lemma}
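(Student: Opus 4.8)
The plan is to verify the three assertions in turn, each by a direct computation. First I would check that $F$ is a vector polynomial map. Since $G(\gamma,\bfx) = G^1(\bfx) + \gamma G^2(\bfx)$ with each component a multivariate polynomial, and $T'\bfx$ is linear in $\bfx$, the difference $G(\gamma,\bfx) - T'\bfx$ is again of the form (polynomial in $\bfx$) plus $\gamma\cdot$(polynomial in $\bfx$). Left-multiplying by the constant matrix $\brc{I_n-T'}^{-1}$ preserves this structure componentwise, so $F(\gamma,\bfx) = \brc{I_n-T'}^{-1}(G^1(\bfx)-T'\bfx) + \gamma\brc{I_n-T'}^{-1}G^2(\bfx)$ is a vector polynomial map. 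In particular $F(\gamma,{\bf 0}) = {\bf 0}$ because $G(\gamma,{\bf 0})={\bf 0}$.

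Next I would compute the Frechet derivative of $F$ at ${\bf 0}$. Since $F$ is a vector polynomial map vanishing at ${\bf 0}$, Lemma~\ref{lem:fd} applies and the Frechet derivative is just the Jacobian at ${\bf 0}$. By linearity of the Jacobian and the fact that $\brc{I_n-T'}^{-1}$ is a constant matrix, the Jacobian of $F$ at ${\bf 0}$ equals $\brc{I_n-T'}^{-1}$ times the Jacobian of $\bfx \mapsto G(\gamma,\bfx)-T'\bfx$ at ${\bf 0}$, which is $\brc{I_n-T'}^{-1}\brc{\gamma T + T' - T'} = \gamma\brc{I_n-T'}^{-1}T$. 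This is exactly the claimed derivative, and $\gamma$ appears multiplicatively as required by Hypothesis~2 of Theorem~\ref{thm:kr}.

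Finally I would show the fixed-point sets coincide. Suppose $G(\gamma,\bfx)=\bfx$. Then $G(\gamma,\bfx)-T'\bfx = \bfx - T'\bfx = \brc{I_n-T'}\bfx$, so $F(\gamma,\bfx) = \brc{I_n-T'}^{-1}\brc{I_n-T'}\bfx = \bfx$. Conversely, if $F(\gamma,\bfx)=\bfx$, then $\brc{I_n-T'}^{-1}\brc{G(\gamma,\bfx)-T'\bfx} = \bfx$, hence $G(\gamma,\bfx)-T'\bfx = \brc{I_n-T'}\bfx = \bfx - T'\bfx$, which gives $G(\gamma,\bfx)=\bfx$. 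Thus the two fixed-point sets (including the trivial and nontrivial parts) are identical.

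There is no real obstacle here: every step is an elementary algebraic manipulation, and the only hypothesis genuinely used is the invertibility of $I_n-T'$, which is assumed. The one point deserving a word of care is that the Frechet derivative in the sense of the definition in the text carries the parameter interval dependence for $\delta$; but since $F$ is a vector polynomial map, Lemma~\ref{lem:fd} already discharges this, so invoking it is all that is needed.
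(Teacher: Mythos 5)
Your proof is correct and follows essentially the same route as the paper: $F$ is a vector polynomial map by inspection, the Frechet derivative is obtained by left-multiplying the derivative of $\bfx \mapsto G(\gamma,\bfx)-T'\bfx$ by the constant matrix $\brc{I_n-T'}^{-1}$, and the fixed-point identity is the same elementary algebra. The only slight difference is that you write out both directions of the fixed-point equivalence, whereas the paper gives only one direction (the converse being immediate since $\brc{I_n-T'}^{-1}$ is invertible); this is a matter of completeness, not of approach.
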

\begin{proof}
The fact that $F$ is a vector polynomial map is obvious. For the Frechet
differentiability of $F$ we need that $F(\gamma, {\bf 0}) = {\bf 0}$.  Now,
$F(\gamma, {\bf 0})=\brc{I_n-T'}^{-1} \brc{G(\gamma, {\bf 0})-T'{\bf 0}} = {\bf 0}$,
as $G(\gamma, {\bf 0})={\bf 0}$. Now the Frechet derivative of $\brc{G(\gamma ,{\bf
x})-T'{\bf x}}$ is given by $\gamma T \bfx$. This implies that the Frechet
derivative of $F\brc{\gamma, \bfx}$ is equal to $\gamma \brc{I_n-T'}^{-1} T$. To see
that $F$ and $G$ have the same set of fixed points, let $\bfx$ be a fixed point 
of $G$. Then $G\brc{\gamma, \bfx} - T' \bfx = \bfx - T' \bfx$ which implies  
$\brc{I_n-T'}^{-1} \brc{G\brc{\gamma, \bfx} - T' \bfx} = \bfx$ 
i.e.~$F\brc{\gamma, \bfx}  = \bfx$.
\end{proof}
\section*{Acknowledgment}
Many thanks to the member of non linear analysis reading group: Nicolas Macris, 
Shrinivas Kudekar, Satish Babu Korada, Sanket Dusad, Dinkar Vasudevan, Harm Cronie. 
Many thanks to Andrea Montanari for helpful discussions.

\end{document}